\newtheorem{theorem}{Theorem}
\def\BibTeX{{\rm B\kern-.05em{\sc i\kern-.025em b}\kern-.08em
    T\kern-.1667em\lower.7ex\hbox{E}\kern-.125emX}}
\newcommand{\edit}[1]{{#1}}
\begin{document}

\title{An Economic Model for Quantum Key-Recovery Attacks \edit{against Ideal Ciphers}}

\author{\IEEEauthorblockN{Benjamin Harsha, Jeremiah Blocki}
\IEEEauthorblockA{\textit{Purdue University}\\
\IEEEauthorblockA{\textit{\{bharsha,jblocki\}@purdue.edu}}\\}
}

\maketitle

\begin{abstract}
It has been established that quantum algorithms can solve several key cryptographic problems more efficiently than classical computers. As progress continues in the field of quantum computing it is important to understand the risks they pose to deployed cryptographic systems. Here we focus on one of these risks - quantum key-recovery attacks against ideal ciphers. Specifically, we seek to model the risk posed  by an economically motivated quantum attacker who will choose to run a quantum key-recovery attack against an ideal cipher if the cost to recover the secret key is less than the value of the information at the time when the key-recovery attack is complete. In our analysis we introduce the concept of a quantum cipher circuit year to measure the cost of a quantum attack. This concept can be used to model the inherent tradeoff between the total time to run a quantum key recovery attack and the total work required to run said attack. Our model incorporates the time value of the encrypted information to predict whether any time/work tradeoff results in a key-recovery attack with positive utility for the attacker. We make these predictions under various projections of advances in quantum computing. We use these predictions to make recommendations for the future use and deployment of symmetric key ciphers to secure information against these quantum key-recovery attacks. We argue that, even with optimistic predictions for advances in quantum computing, 128 bit keys (as used in common cipher implementations like AES-128) provide adequate security against quantum attacks in almost all use cases.
\end{abstract}

\section{Introduction}\label{sec:intro}

As the field of quantum computing progresses it is crucial for security practitioners to understand the potential risks posed to deployed cryptosystems. In this work, we focus on quantum key-recovery attacks for symmetric key \edit{ciphers}, e.g., \edit{the Advanced Encryption Standard} (AES) \edit{blockcipher}.  Classically a symmetric key-recovery attack requires \edit{$N \approx 2^{n}$} queries in the ideal cipher model\footnote{\edit{In the ideal cipher model the attacker is only allowed to interact with the cipher as a blackbox oracle.}} where $n$ is the size of the secret key (bits). By contrast, Grover's algorithm only requires $\approx 2^{n/2}$ queries in the (quantum) ideal cipher model. While \edit{Grover's} attack requires exponential work\footnote{By contrast, Shor's algorithm can be used to break any public key encryption scheme whose security relies on the hardness of the integer factorization or discrete logarithm problem in polynomial time. This includes most widely deployed public key encryption/signature schemes including RSA, EC-DSA, Schnorr Signatures, ECDH etc. Thus,   there is a need to migrate towards ``Post-Quantum'' schemes that resist known quantum attacks like Shor's algorithm~\cite{nist_call_for_pqc,chen2016report}.}, it constitutes a dramatic reduction compared to classical attacks. For example, it would be infeasible for a powerful nation state attacker to evaluate AES $2^{128}$ times, but $2^{64}$ evaluations might be feasible even for much less sophisticated attackers! 

Traditional wisdom says that one can ensure $n$ bits of security for an ideal cipher by simply selecting $2n$ bit keys instead of $n$. However, this conservative advice might dramatically overestimate the capability of the attacker. In particular, Grover's search requires $2^{n/2}$ {\em sequential} queries meaning that attack might not finish in our lifetime. We remark that, in the ideal cipher model, any quantum key-recovery attack making at most $O\left(2^{n/2}/\sqrt{k}\right)$ sequential queries requires at least $\Omega\left(2^{n/2}\sqrt{k}\right)$ total queries\edit{, where $k$ represents the number of parallel quantum circuits being used}. Thus, one can parallelize Grover's search to reduce the running time by a factor of $\sqrt{k}$, but this approach necessarily increases the total amount of work by a factor of $\sqrt{k}$.

In this paper we advocate for an economic approach to evaluate the security of symmetric key primitives (e.g., AES-128) in a post-quantum world instead of focusing only on the running time of the fastest attack. Wiener argued that the ``full cost'' of a cryptanalytic attack~\cite{JC:Wiener04} should account for all of the required resources e.g., the cost of the circuit running the attack amortized over the number of instances that can be solved over the lifetime of the circuit. This view has guided the design and analysis of secure memory hard functions for protecting low entropy secrets like passwords against brute-force attacks~\cite{Per09,Argon2,STOC:AlwSer15,C:AlwBlo16,SP:BloHarZho18}. Taking this view we aim to model (and lower bound) the cost of running a quantum key-recover attack. We take the view that an attacker will only run a brute-force attack if the ``full cost'' of the attack is less than the value of the information that can be decrypted at the time when the quantum brute-force attack completes i.e., information decrypted $10$ years in the future may be worth less than if the documents had been decrypted today.

\section{Contributions}\label{sec:contributions}
We introduce an economic model to analyze the vulnerability of ideal symmetric key ciphers to quantum key-recovery attacks such as Grover's algorithm. In our model a rational attacker will run a quantum  key-recovery attack only if the cost of running the attack exceeds the value of the information at the time the decryption key is recovered.  One of the challenges an attacker faces is that Grover's algorithm is inherently sequential i.e., the algorithm runs in $\Omega(2^{n/2})$ sequential steps. Thus, the value of the encrypted information may be significantly reduced by the time the symmetric key is recovered e.g., after $10$ to $100$ years. Our economic model incorporates several different models of the time-value of information, as well as the full space of time/cost trade-offs available to a quantum attacker. For example, the attacker can reduce the running time by a factor of $\sqrt{k}$ by running $k$ independent searches in parallel, but this increases the full cost of Grover's attack by a factor of $\sqrt{k}$. It has been shown that (parallel) Grover search is asymptotically optimal meaning that {\em any} quantum key-recovery attack will run up against the same fundamental trade-offs~\cite{zalka1999grover}. 

Given a concrete implementation of the cipher as a quantum circuit (depth/width), predictions about the speed and cost of future quantum computers and a model describing the time-value of the encrypted information our economic model allows us to quickly decide whether or not it is profitable for an attacker to run a key-recovery attack. Similarly, we can use our model to predict how fast/cheap a quantum computer would need to be to make a key-recovery attack profitable.  It is impossible to predict how quickly quantum technology will advance. Instead we consider a wide range of predictions about the speed, size and cost of quantum computers in the next few decades and analyze costs in each scenario. We use {\em quantum mania} to refer to the world in which all of our most optimistic predictions about advances in quantum computing are realized\footnote{The name {\em quantum mania} is intentionally meant to be reminiscent of {\em crypto mania}~\cite{cryptomania} in Impagliazzo's ``five worlds.''}. In each world we can use an implementation of a quantum cipher circuit to calculate an estimated monetary value for a Cipher Circuit Year (CCY) which intuitively represents the amortized annual cost of using a quantum circuit (e.g., for the AES cipher).

As a case study we consider an organization that is considering deploying AES-GCM with $n$ bits of security in an embedded device with different security parameters $n \in \{128,196,256\}$. On the one hand it would be possible to decrease energy consumption (and/or increase throughput) by using $n=128$ bit keys~\cite{HelionAES}. On the other hand the lifetime of an embedded device can be several decades meaning that the organization will want to ensure that the cryptosystem is sufficiently resilient against quantum computers several decades in the future. It would be wise for the organization to make a conservative decision to ensure that the cost a future quantum key-recovery attack is prohibitive even under optimistic predictions about the advancement of quantum computing. We argue that a conservative organization can safely use AES-128. In particular, even under the most optimistic predictions about the speed/cost of future quantum computers (quantum mania) the cost of cracking a 128-bit key within 100 years would be {\em at least} $\$ 9.8 \times 10^{10}$ (100 Billion USD) or at least $\$9.8 \times 10^{11}$ (1 Trillion) to crack the key within 10 years. We conclude that AES-128 should provide sufficient protection against a rational attacker in almost all application scenarios. 

\section{Background}\label{sec:background}
While a deep understanding of quantum computing is not required to understand the results in this paper we include some basics in Appendix~\ref{apdx:qintro}. This background includes standard models/notations for quantum computing and a description of how Grover's algorithm works.

\subsection{Grover's algorithm}\label{sec:groverintro}
Grover's algorithm~\cite{STOC:Grover96} is (in Grover's words) ``A fast quantum mechanical algorithm for database search". Given black box access to some function $f: X \rightarrow Y$ and some value $y \in Y$ we wish to find a value for $x^* \in X$ such that $f(x^*) = y$. Using a classical computer requires $O(N)$ time, where $N = |X|$. However, if we exploit Grover's algorithm this can be improved from $O(N)$ to $O\left(\sqrt{N}\right)$. \edit{Given a known plaintext/ciphertext pair $(m,c=Enc_K(m))$ we can define a function $f_{c,m}(K')$ which outputs $1$ if and only if $Dec_{K'}(c)=m$ i.e., if a candidate encryption key $K'$ converts the known plaintext into the known ciphertext. Thus, if $K \in \{0,1\}^n$ we can recover the decryption key $K$ in time $O\left(\sqrt{2^n}\right)$ by running Grover's search on inputs $y=1$ with the function $f_{c,m}$.} From an asymptotic standpoint this significantly decreases the amount of time it would take to brute force a key e.g. it would only take $\approx 2^{128}$ steps to find a key for AES with a 256 bit key. This is especially significant for AES when using shorter key lengths (e.g. 128 bits) where the number of steps required to find a key becomes more and more feasible. Further details on Grover's algorithm can be found in Appendix~\ref{apdx:gintro}.

Grover's algorithm can be slightly modified to run with $k$ ``buckets" in parallel. The basic idea is to partition the search space $X$ into $k$ buckets $X_1,\ldots, X_k$ of size $N/k$ and run Grover's search on each bucket in parallel. Each bucket $X_i$ has size $\frac{N}{k}$ so we make $\sqrt{N/k}$ queries in each bucket $X_i$ a total of $\sqrt{Nk}$ queries over all $k$ buckets. Thus, we obtain a speedup of $\sqrt{k}$ but our total costs also increase by  $\sqrt{k}$. Zalka~\cite{zalka1999grover} proved that this is optimal. In particular, {\em any} quantum search algorithm running in sequential time $\sqrt{N/k}$ must make {\em at least} $\Omega(\sqrt{Nk})$ queries to our function $f$. Thus, in the ideal cipher model Grover's search (and its parallel counterparts) are optimal~\cite{zalka1999grover}. We observe that if running time is not an issue then it is always more cost effective to set $k=1$.

\subsection{Current and Future Quantum Computers}
While quantum algorithms like Shor's and Grover's have been known for some time as of this writing there are not yet quantum computers capable of running them in a practical attack. In fact, there is a very significant gap between the number of qubits in current quantum computers and the number of qubits that would be required to run an effective attack. For much of the past decade systems were limited to only a handful of qubits (e.g. 2-10)~\cite{corcoles2013process,barends2014superconducting} though in the past few years the number of qubits has jumped to the order of tens of qubits per quantum computer~\cite{hsu_2018,kelley_2018}. While the Bristlecone project was recently able to achieve ``quantum supremacy'' in the sense that they performed a computation faster than any classical computer could\cite{kelley_2018}. However, we are still quite far away from having a quantum computer capable of running an attack like Grover's algorithm or Shor's algorithm. We note that while companies like D-Wave claim to have much larger quantum computers these are not universal quantum computers capable of running attacks like Shor or Grover \cite{temperton_2017}.

	Estimating what future quantum computers will look like in the near and distant future is a difficult task made harder by the low number of existing data points. We take an approach similar to Impagliazzo and examine multiple possible ``worlds" with different levels of future advancement \cite{cryptomania}. These worlds range from fairly steady improvements to incredible advances in the field. There are significant technical barriers to building full scale quantum computers e.g., decoherence~\cite{knight_2017}, temperature maintenance~\cite{dwaveDescription} and high costs. Current quantum systems require temperatures very close to absolute zero and are only stable for tens of microseconds. In our most optimistic world (quantum mania) we assume that all of these challenges are convincingly addressed making it feasible to build cheap/fast quantum computers which are stable for years.

\section{Related Work}\label{sec:related}

\subsection{Analysis of Grover's algorithm}
From its introduction in 1996~\cite{STOC:Grover96} Grover's was recognized as a clear and practical example of a quantum algorithm that can outperform classical algorithms. Bennett et. al. showed that Grover's search was asymptotically optimal and that $NP$ cannot be solved by a quantum Turning machine (QTM) relative to a uniformly random oracle~\cite{bennett1997strengths}. In 1999 Zalka conducted a more fine grained analysis showing that for any number of oracle lookups up to $\frac{\pi}{4}\sqrt{N}$ Grover's algorithm is optimal~\cite{zalka1999grover}. Zalka also showed that the parallel version of Grover's algorithm is optimal.

Grassl et al~\cite{PQCRYPTO:GLRS16} looked into the problem of constructing a concrete quantum AES circuit motivated by quantum key-recovery attacks. In their analysis they provide concrete numbers (width/depth) for various AES implementations. Scott Fluhrer considered the problem of running a key-recovery attack for AES with a fixed computation budget (\# AES queries) or with a fixed time bound. For example, he estimated that because Grover's algorithm is inherently sequential it would take at least $2^{125}$ entangled queries to our AES-192 to recover the $192$-bit key within $200$ years. By contrast, we focus on developing an economic model to analyze the cost/benefit of a rational attacker and demonstrate that AES-128 should be safe against rational attackers.

\subsection{Post-quantum cryptography}

Shor's algorithm can be used to break any public key encryption scheme whose security relies on the hardness of the integer factorization or discrete logarithm problem in polynomial time. This includes most widely deployed public key encryption/signature schemes including RSA, EC-DSA, Schnorr Signatures, ECDH etc. Thus,  there is a need to migrate towards ``Post-Quantum'' schemes that resist known quantum attacks like Shor's algorithm~\cite{nist_call_for_pqc,chen2016report}.  The U.S. National Institute of Standards and Technology (NIST) has been working on developing a set of standards for post-quantum cryptography. NIST first published a report on quantum cryptography in 2016~\cite{chen2016report} which outlined their understanding and future plans, and released a call for proposals in 2017~\cite{nist_call_for_pqc}. In this document NIST proposes that an attacker running an attack over one or ten years be limited to a quantum circuit of $2^{40}$ or $2^{64}$ layers, respectively. This allows us to implicitly derive speeds for quantum computers running these attacks.

\subsection{Modeling Economic Attackers and their Costs}
Multiple authors have argued that space-time (or area-time) product is a more appropriate measure of costs than time alone, including Wiener's notion of ``full cost"~\cite{JC:Wiener04}, Alwen and Serbinenko's amortized area-time cost~\cite{STOC:AlwSer15}. In the password hashing competition~\cite{PHC} all but one finalist claimed some form of memory hardness i.e., high area-time cost. Our economic model is inspired by a model Blocki et al.~\cite{blocki2016cash,SP:BloHarZho18} developed to analyze the behavior of a (profit motivated) password cracking attacker. Space-time costs have been applied in quantum computing to show that quantum hash collision methods are not as cost effective as classical circuits \cite{bernstein2009cost}, analyze the cost of RSA factorization~\cite{gidney2019factor}, AES implementations~\cite{PQCRYPTO:GLRS16}, and elliptic curve cryptography~\cite{AC:RNSL17}. Especially relevant to this work are estimates for the width and depth of AES (\cite{PQCRYPTO:GLRS16,almazrooie2018quantum,langenberg2020reducing}).

\section{An Economic Model for Quantum Key-Recover Attacks}\label{sec:econ}
We now introduce an economic model that estimates the gain (or loss) of a quantum key-recovery attack. Our model includes the following components: (1) The initial value $v_0$ of the encrypted information and a function $R(T,v_0)$ which describes how this value decays over time $T$. (2) A time limit $T_y$ (years) for the attack e.g., $1$--$100$ years. (3) The width and depth of a quantum circuit implementing the cipher we are analyzing e.g., see~\cite{PQCRYPTO:GLRS16,cryptoeprint:2019:854} for estimates of AES. (4) The (predicted) speed of a universal quantum computer (gates/sec), (5) The (predicted) cost of renting a single quantum circuit capable of evaluating this cipher (dollars/year). Given these parameters our model allows us to determine whether or not a profitable attack exists. Fixing all of the parameters except for the initial value of the encrypted information we can determine how valuable the information would need to be for a quantum key-recovery attack to be profitable. Alternatively, fixing the initial value of the information (and a decay function) we can ask how fast/cheap a quantum computer must be to make a quantum key-recovery attack profitable.

We remark that components three and four of our model  (speed/cost of future quantum computers)  are arguably the most difficult to predict. We advocate for a conservative approach where we attempt to upper bound (resp. lower bound) the speed (resp. cost) of a future quantum computer. We remark that NIST considers $2^{64}$ to be a safe upper bound on the depth of any quantum circuit which can be evaluated in $10$ years which would correspond to a speed of $5.8\times 10^{10}$ gates per second. Thus, we might take $60$ GHz as our conservative upper bound on the speed of a quantum computer. 

The attacker will select a desired time $T$ (years) for the key-recovery attack to complete. We can infer the level of parallelism necessary to complete the attack in time $T$ given additional information about the depth of our quantum circuit implementing our cipher (e.g., AES) as well as the gate propagation speed of our quantum computer. We use $C(T)$ to denote the minimum possible cost of a quantum key-recovery attack with a time bound $T$. Intuitively, as $T$ decreases the level of parallelism increases as well as the cost $C(T)$. We use the reward function $R(T,v_0)$ to describe the attacker's benefit when the encrypted information is recovered at time $T$. Here, $v_0 = R(0,v_0)$ denotes the initial value of the encrypted information which may decrease over time. Thus, the profit of the attacker is $P(T,v_0)= R(T,v_0)-C(T)$ and the attacker will select the time parameter $T^* = \arg\max_T P(T,v_0)$ to optimize profit. If $P(T^*,v_0)<0$ a rational attacker will choose not to attack. For notational convenience we use  $P(0,v_0)=0$ to denote the profit of an adversary who does not run the attack i.e., $T=0$.

\subsection{Cipher Circuit Year}
To estimate the costs of running an attack we first define the concept of a Cipher Circuit Year (CCY). Intuitively, a CCY represents the annual rental cost (which factors in equipment, labor, electricity, and any other expenses) of a quantum computer capable of evaluating our cipher (e.g., AES)\footnote{Alternatively, we could think of CCY as representing the opportunity cost when this quantum computer used to running our key-recovery attack instead of performing other computation. Finally, we could think of CCY as representing this cost of building the quantum computer divided by the (expected) number of years before the quantum computer breaks.}. We can use CCY as a way to examine the monetary cost of a key recovery attack. For example, if we are able to complete a key recovery attack (e.g using Grover's algorithm) with no parallelism (i.e. using only one circuit) in 10 years then this attack would cost 10 CCY. However, if the same attack was completed in 1 year (which will require the use of 100 circuits running Grover's algorithm in parallel) we would have a cost of 100 CCY. Similar notions such as full cost~\cite{JC:Wiener04} or aAT complexity~\cite{STOC:AlwSer15} have been very fruitfully applied in the area of password hashing as a method of estimating costs of computation. Throughout this work we are considering attacks in the (quantum) ideal cipher model i.e. we do not concern ourselves with (quantum) structural attacks against a cipher like AES e.g. \cite{bonnetain2019quantum}.

\subsection{Required Level of Parallelism and Attack Costs} Suppose that we have a time bound $T_y$ (unit: years) for our key-recovery attack. Given the gate propagation speed $s$ (Hz) of a quantum circuit we can use $T_y$ to upper bound the total depth $t = T_y \times s$ of our computation (quantum gates) e.g., if $s=1GHz$ and $T_y=1$ year then $t=3.15 \times 10^{16}$ seconds. Supposing that our cipher can be implemented as a depth $d$ circuit our key-recovery attack can make at most $t/d$ sequential oracle queries to the cipher. If we partition our search space $\{0,1\}^n$ into $k$ buckets of size $N/k$ and run Grover's attack on each bucket in parallel then we require at least $\frac{\pi}{4}\sqrt{\frac{N}{k}}$ sequential oracle calls in each bucket ($\frac{\pi}{4}\sqrt{Nk}$ total oracle calls). Thus, $t/d \geq \frac{\pi}{4}\sqrt{\frac{N}{k}}$ which means that we require parallelism $k\geq\frac{\pi^2 N}{16\left(\frac{t}{d}\right)^2}$. The total cost will be minimized when equality holds. The total cost will be $C(T_y) = T_y \times k \times C_{CCY}$, where $C_{CCY}$ is the cost of a CCY e.g., in USD. We remark that the value of $k$ will depend on the time bound $T_y$, the depth $d$ of our cipher and the speed $s$ (Hz) of our quantum computer. Substituting into the above formula we get
$$
C(T_y) = \frac{C_{CCY} \pi^2 N d^2}{16 T_y s^2}
$$

Intuitively, the cost decreases as we relax the time bound $T_y$. If $T_y \geq \frac{\pi}{4}\sqrt{N} * \frac{d}{s}$ is sufficiently large to set $k=1$ we have $C(T_y) = C\left( \frac{\pi}{4}\sqrt{N} * \frac{d}{s}\right)$.

We note that attack costs are directly linked to an attacker's strategy. If an attacker considers the value of information to be less than the cost to run the attack we say that a rational attacker will choose to not run the attack, leaving the information secure.

\subsection{Time-Value of Information and Reward Functions}\label{sec:timeval}
We first discuss several different instantiations of the reward function  $R(T,v_0)$ which defines the time-value of the encrypted information. We will always assume that the function is monotonic i.e., $R(T,v_0) \leq R(T-\epsilon, v_0)$. Intuitively, obtaining the secret information earlier (e.g., at time $T-\epsilon$) is preferable to obtaining the secret information later\footnote{If the attacker prefers to wait to time $t$ to recover the secret information he could always run the attack and then wait $\epsilon$ seconds to measure the quibits}. In our analysis we consider three types of reward functions: (1) Constant functions $R(T,v_0)=v_0$ i.e., the time-value of the information does not diminish over time. (2) Threshold Functions where the information has value $v_0$ before time $T'$ and value $0$ afterwards i.e.,  $R_{T'}(T,v_0) = v_0$ whenever $T < T'$ and  $R_{T'}(T,v_0) = 0$ for $T > T'$. (3) Delta Discounting where the time-value of the information smoothly decays with some fixed rate $0<\delta < 1 $ i.e., $R_{\delta,T'}(T,v_0) = v_0\delta^{T}$. While this is not an exhaustive list of all possible reward functions we believe our list constitutes a reasonable range of behaviors.  

We remark that a threshold function is appropriate in settings where the encrypted information will become public at some time $t'$ in the future e.g., scripts for a soon to be released blockbuster movies or plans for an upcoming military campaign. The constant reward function can be seen as a special case of delta-discounting with $\delta = 1$ and threshold $T'=\infty$. Below we analyze the attacker's optimal strategies with respect to each reward function.

\subsection{Rational Attacker Strategies} A symmetric key-recovery attacker can pick a desired parallelism parameter $k$. Larger values of $k$ reduce the running time $T$ . Thus, by picking large $k$ we can potentially earn a larger reward $R(T,v_0)$, but at the expense of total cost $C(T)$. However, as long as the total profit $P(T,v_0) = R(T,v_0)-C(T)$ increases it is in the adversaries best interest to pick a larger value of $k$.  \\

{\bf Constant valuation: }For constant reward functions profit is maximized whenever $C(T,k)$ is minimized. As our total time and work only increase with the addition of more oracles, $C$ is minimized by setting $k=1$ i.e. running a sequential attack. We argue that constant valuation is rarely an appropriate model e.g., we expect that the value of information will not be useful after $100$ years since most people who are currently alive won't be around to benefit. \\

{\bf Threshold function: } We next consider the threshold reward functions where information has value $v_0$ before time $T'$ and value $0$ afterwards e.g., plot points for an upcoming movie. 
\[ R_{T'}(T,v_0) = \begin{cases} 
      v_0 & T \leq T' \\
      0 & T > T' 
   \end{cases}
\], where $v_0$ is the value of the information if it is recovered in time. In such a case there is no need to decrypt the information after time $T'$ so the attacker effectively faces a time limit of $T'$. Since the reward is constant before time $T'$ the attacker will maximize profit by selecting the minimum possible level of parallelism necessary to finish in time exactly $T'$ i.e., $k = \frac{\pi^2 N}{16 \frac{t}{d}^2}$ where  $t = T' \cdot s$. \\

{\bf Delta discounting with Threshold} We now analyze the behavior of the attacker with smooth $\delta$-discounting reward functions i.e., $R_{\delta,T'}(T,v_0) = v_0\delta^{T}$ for $T \leq T'$ and $R_{\delta,T'}(T,v_0) = 0$ if $T \geq T'$. Here, $0 < \delta \leq 1$ is our decay parameter and $T'$ is our threshold.  The attacker wants to pick a time $T$ which maximizes profit $P(T,v_0)= R_{\delta,T'}(T,v_0) - C(T)$. We show that there are three possible ways to maximize the profit function $P(T,v_0)$. (1) If the attacker does not run the attack $T= 0$ then $P(0,v_0)=0$. (2) The attacker sets $T=\min\{T_{seq},T'\}$ where $T_{seq} = \frac{\pi d}{4s}\sqrt{N}$ is the time to run the sequential version of Grover's algorithm $(k=1)$ when the speed is $s$ and the depth of the underlying cipher circuit is $d$. (3) The attacker sets $T = T^*$ for a special value $$
T^* = \frac{2 W\left(\frac{1}{2}\sqrt{c}\log \delta\right)}{\log \delta}.
$$ Here we let $c = \frac{\Lambda}{v \ln \delta^{-1}}$ and $W(\cdot)$ denotes the analytic continuation of the product log function i.e., the Lambert W function. We note that this function can be efficiently evaluated. The value $T^*$ is derived by analyzing the derivative of $P(\cdot,v_0)$. The full details of this derivation are in Appendix~\ref{apdx:derive}.

\subsection{On the Future Cost and Speed of Quantum Computers}
Our economic model utilizes predictions of the future speed/cost of quantum computers. However, it is difficult (or impossible) to predict what these values may be. Instead we consider a range of possible future worlds: quantum mania, optimistic improvements and steady improvements. Arguably, all of these worlds represent optimistic predictions of the future power of quantum computers. We could add a fourth pessimistic world where the field of quantum computing is stuck for decades due to insurmountable technical barriers e.g., decoherence, temperature maintenance. However, in such a world it would not be interesting to analyze quantum attacks. We advocate for a conservative approach where we attempt to upper bound (resp. lower bound) the speed (resp. cost) of a future quantum computer. In particular, if an attack is not profitable in our quantum mania scenario then it is reasonable to assume that no attack will be profitable.

\begin{itemize}
\item {\bf Quantum Mania: } Here we assume that quantum computers have enjoyed incredible advances, both in gate speed, number of qubits, and cost. In particular, we assume that quantum circuits can be evaluated at a gate propagation speed of 60GHz which we derive from NIST's proposed upper bound on the maximum depth $(2^{64})$ of a quantum circuit which could be evaluated in $10$ years~\cite{nist_call_for_pqc} i.e., \edit{$60GHz \approx 2^{64}/(10 \times Y_{sec})$ where $Y_{sec} = 3.154 \times 10^7$ is the number of seconds a year.} We also assume that dramatic advancements in QC technology  e.g. temperature maintenance and construction costs making it possible to rent a quantum AES circuit for $\$50$ per year i.e., $C_{CCY}=\$50$. 

\item {\bf Optimistic improvements: } We assume a slightly slower gate propagation speed of 1GHz for quantum computers comparable to the clock speed of current desktop computers. We also assume that $C_{CCY} = 500 USD$. This price is meant to be in line with a budget desktop one can currently purchase.
\item {\bf Steady improvements: } We assume that future quantum circuits can be evaluated at a gate propagation speed of 100MHz. We set $C_{CCY} = 50000 USD$ here. In this scenario the future speed/cost of quantum computers is not comparable to current classical machines. However, this future world would still constitute a dramatic increase in QC technology. 
\end{itemize}

\section{Case study: AES128}
In this section we use our economic model to analyze the cost of breaking a $128$ bit AES key. To apply our model we first require a concrete implementation of AES-128 as a quantum circuit. Multiple groups have consider the problem of implementing AES-128 as a quantum circuit resulting in a series of increasingly efficient constructions~\cite{PQCRYPTO:GLRS16,almazrooie2018quantum,langenberg2020reducing}. Specifically, Langenberg et al.~ \cite{langenberg2020reducing} provide the implementation with the smallest depth $d \approx 5.8 \times 10^4$. This corresponds to $3.27 \times 10^{13}$ sequential AES oracle calls per year in our quantum mania scenario. \\

In our analysis we consider an attacker with a threshold reward function $R_{T'}(T,v_0)$ for thresholds $T' \in \{1,10,100\}$ years. Here we aim to determine how valuable the encrypted information $v_0$ must be for a profitable attack to exist. We repeat this analysis for each of our quantum scenarios: quantum mania, optimistic improvements and steady improvements. Similarly, we can analyze the behavior of a profit motivated attacker when faced with $\delta$-discounting rewards $R_{T',\delta}(T,v_0)$ for thresholds $T' \in \{1,10,100\}$ years. Here, we plot the minimum reward $v_0$ for a profitable attack vs. $\delta$. Intuitively, as $\delta$ increases (slower diminishing rewards) the minimum value $v_0$ will increase. Finally, if we fix $v_0$ we can ask how fast/cheap would a quantum computer need to be for a profitable attack to exist.

\subsection{Threshold Functions} 
We first begin by examining what the costs would look like if the value follows a threshold function. When considering our 100, 10, and 1-year attacks we first convert this to some value $d$, which in this case is representing the number of circuit layers we have available given the quantum power estimates and the time available. For example, in the incredible improvements scenario, we have $t = 1.892 \times 10^{20}$, which is derived from the 60GHz figure combined with the 100-year time span. This $t$, combined with the AES-128 circuit depths from ~\cite{PQCRYPTO:GLRS16}, allows us to derive the number of oracle calls that can be made in the set time. With a number of oracle calls possible in the time we derive $k$ such that the attack would finish in the allotted time. $k$ times the attack length in years gives us our CCY cost. A final substitution for the cost ratios described earlier puts a cost in USD to run each attack. These threshold results are described in Tables ~\ref{tab:100y_threshold}, ~\ref{tab:10y_threshold}, and ~\ref{tab:1y_threshold}.
\begin{itemize}
\item {\bf 100-year attack: } A 100-year attack represents the most persistent of adversaries. This is an attack that spans generations and would represent an enormous effort to recover some piece of information. In many ways, this is an impractical attack, as there are not many cases where it is worth protecting information for 100 years. Still, we include this type of attack to make a point about the costs of a quantum key-recovery attack. The estimated costs for this attack with a threshold function are in Table \ref{tab:100y_threshold}, \ref{tab:10y_threshold}, and \ref{tab:1y_threshold}.
\begin{table}[H]\caption{100 Year Attack, Threshold value function \label{tab:100y_threshold}}
\begin{center}
\begin{tabular}{|l|c|c|}
\hline
Advancement & $t$ & $k$ \\\hline
Mania & $1.892\times10^{20}$ & $1.962\times10^7$ \\\hline
Optimistic & $3.154 \times 10^{18}$ & $7.064 \times 10^{10}$ \\\hline
Steady & $3.154 \times 10^{17}$ & $7.064 \times 10^{12}$ \\\hline
& Cost(CCY) & Cost(USD) \\\hline
Mania & $1.962\times10^{9}$ & $9.810 \times 10^{10}$\\\hline
Optimistic & $7.064 \times 10^{12}$ & $3.532 \times 10^{15}$\\\hline
Steady & $7.064 \times 10^{14}$ & $3.532 \times 10^{19}$\\\hline
\end{tabular}
\end{center}
\end{table}

\begin{table}[H]\caption{10 Year Attack, Threshold value function \label{tab:10y_threshold}}
\begin{center}
\begin{tabular}{|l|c|c|}
\hline
Advancement & $t$ & $k$ \\\hline
Mania & $1.89 \times 10^{19}$ & $1.962 \times 10^{9}$ \\\hline
Optimistic & $3.154 \times 10^{17}$ &$7.064 \times 10^{12}$ \\\hline
Steady & $3.154 \times 10^{16}$ & $7.064 \times 10^{14}$ \\\hline
& Cost(CCY) & Cost(USD) \\\hline
Mania & $1.962 \times 10^{10}$ & $9.810 \times 10^{11}$\\\hline
Optimistic & $7.064 \times 10^{13}$ & $3.532 \times 10^{16}$\\\hline
Steady & $7.064 \times 10^{15}$ & $3.532 \times 10^{20}$\\\hline
\end{tabular}
\end{center}
\end{table}

\begin{table}[H]\caption{1 Year Attack, Threshold value function \label{tab:1y_threshold}}
\begin{center}
\begin{tabular}{|l|c|c|}
\hline
Advancement & $t$ & $k$ \\\hline
Mania & $1.89 \times 10^{18}$ & $1.962 \times 10^{11}$ \\\hline
Optimistic & $3.154 \times 10^{16}$ & $7.064 \times 10^{14}$ \\\hline
Steady & $3.154 \times 10^{15}$ & $7.064 \times 10^{16}$ \\\hline
& Cost(CCY) & Cost(USD) \\\hline
Mania & $1.962 \times 10^{11}$ & $9.810 \times 10^{12}$\\\hline
Optimistic & $7.064 \times 10^{14}$ & $3.532 \times 10^{17}$\\\hline
Steady & $7.064 \times 10^{16}$ & $3.532 \times 10^{21}$\\\hline
\end{tabular}
\end{center}
\end{table}
\item {\bf Ten-year attack: } A ten-year attack still represents a fairly long-term attack. Essentially, this is an attack against information that is not time-sensitive, which might include something like bank account access credentials. The estimated cost with a threshold function can be found in Table \ref{tab:10y_threshold}.

\item {\bf One year attack} Here we consider attacks that may be of interest to an adversary wanting information that is valuable in the near future. This might include things like business strategies, financial plans, etc. The estimated cost with a threshold function can be found in Table \ref{tab:1y_threshold}.

\end{itemize} 

\subsection{$\delta$ discounting method}
For the $\delta$ discounting method we present the information in a slightly different way. We look for the minimum value such that an attack would be profitable to run. As we have $P(T_Y,v_0) = R_{T',\delta}(T_y,v_0) - C(T_y) = v\delta^{T_y} - c_{CCY}\frac{\pi^2 N T_y}{16(t/d)^2}$, where $t=T_y \cdot s$. We have a viable attack if $v \geq \frac{\pi^2 N d^2}{16 T_y s^2 \delta^{T_y}}$. Here we can set $T_y$ appropriately, as we did in the threshold experiments. When this is set we have the option to let $\delta$ range from 0 to 1, or equivalently to allow $\delta^{T_y}$ to range from 0 to 1. 
For the sake of demonstration we will be examining this kind of attack in the ``incredible improvements" scenario. To find this point $v$ we first require the conversion factor between $T_y$ and $t$. We have $d = 57894$ as our circuit depth for AES~\cite{langenberg2020reducing} (taking their round-depth estimates for a full Grover's attack). Supposing that $s= 6\times10^{10} Hz$ (quantum mania) we are able to evaluate a circuit of depth $t= T_y\times 6\times10^{10}Hz$ where $T_y$ is given in years e.g., 1 year, 10 year, or 100 years (here we take 100). Substitution gives \edit{ $\frac{t}{d} = \frac{T_y \times Y_{sec} \times 6\times10^{10}Hz}{57894} = T_y \times 3.271 \times 10^{13}$}. We know that an attack is profitable only if
$$
v \geq \frac{C_{CCY} \pi^2 N}{16 (3.271 \times 10^{13})^2 T_y \delta^{T_y}}
$$

Fixing $T_y \in \{1,10,100\}$ and letting $\beta = \frac{C_{CCY}\pi^2 N}{16\left(3.271 \times 10^{13}\right)^2 T_y}$ we have $v \geq \frac{\beta}{\delta_t}$. We plot the minimum value required for this to be true for the given $T_y$, and the results for the quantum mania world are shown in Figure \ref{fig:min_value}. We note that the case $\delta^t = 1$ is identical to the case where $\delta = 1$, and that these values will match those from the threshold function. For other values in this chart, $\delta^{T_y}$ can be understood as the remaining value at the \textit{end} of the attack e.g. $\delta^{T_y}$ of 0.2 in the 100 year case means that 20\% of the value remains after 100 years, while in the 1 year case $\delta^{T_y}$ of 0.2 means only 20\% of the value remains after a single year. In the case $\delta^{T_y} = 0.2$ the specific value of $\delta = \left(0.2\right)^\frac{1}{T_y}$. Thus, our model predicts that AES128 provides sufficient protection provided that the initial value of the information is under $\approx 10^{11.5}$ USD.

\begin{figure}
\caption{\label{fig:min_value}}
\begin{tikzpicture}[scale=0.7]
\begin{semilogyaxis}[xmin = 0.01, xmax = 1, samples=100, xlabel=$\delta^T_y$, ylabel=$v(USD)$,title=Minimum value required for positive profit. \\$\delta^{T_y}$ represents the value left after $T_y$ years]
	\addplot[blue, thick,domain=0.01:1] {9.81e10*(1.0/x)};
	\addlegendentry{100 year attack}
	\addplot[red, thick,domain=0.01:1]{9.81e11*(1/x)};
	\addlegendentry{10 year attack}
	\addplot[green, thick,domain=0.01:1] {9.81e12*(1/x)};
	\addlegendentry{1 year attack}
\end{semilogyaxis}
\end{tikzpicture}
\end{figure}

\subsection{Improvements in circuit width and depth}
We now consider the following question. What happens if we are able to develop smaller quantum circuits to compute a cipher (e.g. the quantum AES circuit)? Improvements might be made either by reducing the width or the depth of the circuit e.g., by exploiting feasible time-space tradeoffs for the function. We first note that if the width is reduced by some factor $c$ then the cost of running the attack also drops by the same factor $c$. More interesting is the case where we are able to reduce the depth of our cipher circuit. As we reduce depth $d$ of the quantum circuit (holding width constant) an attacker running at the same gate speed is able to make more cipher queries in the same amount of time. Thus the attacker saves cost on multiple fronts --- the circuit itself is smaller by some factor $c$ and the attacker is able to reduce parallelism because s/he can make more sequential queries in the same amount of time. In fact, improvements in circuit depth offer quadratic improvements in attack cost, meaning that it is worth reducing the depth of a quantum circuit even if it comes at the cost of increasing the width by a sub-quadratic amount. Specifically, if we reduce $d$ by some factor $0 < \beta < 1$ we now have a more relaxed requirement for our time $t$. Previously for some set real time limit $T_y$ we had time to make $xt$ oracle calls. We now have time for $\frac{1t}{d\beta} > \frac{t}{d}$ calls. Thus our previous $k = \frac{\pi^2 N}{16 \frac{t}{d}^2}$ becomes $k' = \frac{\pi^2 N}{16 (\frac{t}{d} / \beta)^2} = \beta^2 k$. Thus a reduction by $\beta$ offers quadratic returns. Note that increases in QC speed identically affect how many calls per year we can make and offers the same tradeoff. 
Suppose that we reduce depth by a factor of 10 (as is claimed from \cite{PQCRYPTO:GLRS16} to \cite{cryptoeprint:2019:854}), holding width constant. This allows for a 100 fold reduction in the level of parallelism, and a 100 fold reduction in costs. In our 100 year quantum mania example from Table~\ref{tab:100y_threshold} this could lower attack cost to $\approx 9.8 \times 10^{8}$ USD. Far smaller than the previous estimates, but still infeasible in practice. 

\subsection{Computers capable of running profitable attacks}

Under the assumptions from our three worlds of quantum development we found that any quantum key recovery attack for a 128-bit key is not economically feasible. Here we seek to answer a related question: If we want an economically feasible attack, what kind of quantum computer would be required? We follow a similar strategy of proposing three attacks, but note that this analysis works for any relevant parameters. We begin here not by assuming any level of advancement in quantum computing but by assuming an attacked values some piece of information at a particular level. Here we select some USD amount e.g. 100,000, 1,000,000, or 10,000,000 as the value of information. We also allow the attacker to select 1, 10, and 100 year attacks in the same manner as in Section \ref{sec:econ}.
When we set the cost and time limit for these attacks we arrive not at a single quantum computer that would suffice to run the attack but a family of quantum computers with varying speeds and costs. This fact arises from the theoretical ability to bring cost per cubit down if we allow for a computer to have a higher clock speed. So, when we set the total budget and time limit for an attack we arrive at some family of computers described in terms of the cost/speed tradeoffs. These tradeoffs are subject to the quadratic increase in cost seen when increasing the level of parallelism.
We denote a family of quantum computers $\mathcal{Q}_{b,T_y,n}$ based on a budget $b$, time limit in years $T_y$, and key length $n$. This describes the set of quantum computers capable of running a quantum key-recovery attack with the relevant restrictions. The family contains all quantum computers satisfying the property: $\mathcal{Q}_{b,T_y,n} = \left\{q : C_{CCY} \leq \frac{16b\left(\frac{s}{d}\right)^2}{\pi^2 2^n T_y}\right\}$ where $d$ is the depth of the oracle circuit and $s$ is the number of circuit layers that can be processed in the given time limit. 
We can now begin to look at some families of quantum computers based on some reasonable attack budgets. Consider an attack that is of vital importance - where an attacker is willing to spend USD 100 million on an attack on AES-128, and needs it within 100 years. A quantum computer capable of doing so is in $\mathcal{Q}_{1.0 \times 10^8,100,128}$, which contains all quantum computers such that $C_{CCY} \leq \frac{1.6 \times 10^9 \left(\frac{s}{57854}\right)^2}{\pi^2 2^{128} 100} = 1.423 \times 10^{-42} s^2$. Consider a case where we would like $C_{CCY} \leq USD 1000$. This would require the computer to run at a speed of $s = 2.65 \times 10^{22}$ in 100 years, corresponding to a gate propagation speed of $8.403 \times 10^{12}Hz$, well beyond NIST's estimates of around 60GHz~\cite{nist_call_for_pqc}. Any required parameters might be inserted here to see what would be required to make the existing parameters work - a plot of the family that can solve this problem is shown in Fig \ref{fig:speedcosttradeoff}. We note that no matter which parameters you pick for this attack you end up with a computer that is either impossibly fast or impossibly cheap, meaning that no quantum computer that can run an attack with this requested budget and time limit is feasible.

\begin{figure}
\begin{tikzpicture}
\usepgfplotslibrary{fillbetween}
\begin{loglogaxis}[
	xlabel=$a (Hz)$,
	ylabel=$C_{CCY} (USD)$,	
	title=Speed vs QC CCY cost,
	enlarge x limits=false,
	axis on top,
]
\addplot[
	color = blue,
	domain=1e10:1e50,
	fill = blue!30!white
]{1.423e-42 * x^2} -- (rel axis cs:1,0) -- (rel axis cs:0,0);

\end{loglogaxis}
\end{tikzpicture}
\caption{Possible values of $C_{CCY}$ based on estimated QC speed ($b = 1.0 \times 10^8, T_y = 100, n=128$\label{fig:speedcosttradeoff}}
\end{figure}
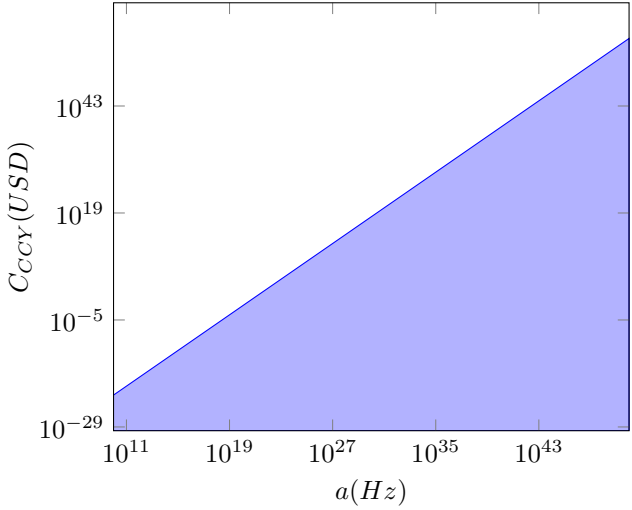


\section{m to 1 Key Recovery Attacks}
With a chosen plaintext attack where multiple keys have been used and a single chosen plaintext can be used with multiple keys it is possible to ``batch" key-recovery attacks for a more effective attack. This might be considered where any one of $m$ keys would be sufficient for an adversary to accomplish their goals e.g. to access some specific set of data that had been sent to multiple people using multiple different keys with the same nonce. We consider a chosen-plaintext attack where an attacker has manged to obtain $M$ ciphertexts $c_1,\ldots c_M$ all encrypting the same known plaintext $m$ i.e., $c_i = Enc_{k_i}(m; r)$ where $m=m_1 || m_2$  consists of two blocks and the randomness $r$ (i.e., nonce) is the same. Such a scenario might arise if we have multiple embedded devices using a stateful mode of operation like AES-CTR with a fixed initialization vector. Modes like AES-GCM would not be susceptible to this attack as long as the nonces are selected appropriately i.e., with strong PRGs. 

Our attacker will be content to crack {\em any} of the keys $k_1,\ldots,k_m$. To run Grover's algorithm the attacker would need to implement the function 
\[
f_{k_1,\ldots,k_M}(x) = 
\begin{cases}
1 & x \in \{k_1,\ldots,k_M\} \\
0 	&	\mbox{otherwise}
\end{cases} \ .
\]

This function could be implemented as follows \[
F_{c_1,\ldots,c_M}(x) = 
\begin{cases}
1 & Enc_x(m) \in \{c_1,\ldots,c_M\} \\
0 	&	\mbox{otherwise}
\end{cases} \ .
\]
We first note that (except with negligible probability) we will have $F_{c_1,\ldots,c_M}(x) = f_{k_1,\ldots,k_M}(x)$ for all inputs $x$ i.e., because $m$ is two blocks long we expect that for each $c_i$ there is only one key $k$ (namely $k=k_i$) s.t. $Enc_k(m) = c_i$. Note that each call to $F_{c_1,\ldots,c_M}$ generates just $2$ calls to the underlying cipher-circuit to obtain $c=Enc_x(m)$ --- both of these calls can be evaluated in parallel. We then need to check whether or not $c \in \{c_1,\ldots, c_m\}$. Since we want to compute $F_{c_1,\ldots,c_M}$ on the same quantum hardware used to evaluate the cipher we require that the width of our circuit is not larger that the width of our AES circuit. When we add this restriction $F_{c_1,\ldots,c_M}$ can be evaluated on a Quantum Circuit of depth $\mathcal{O}\left(d_{AES} + \frac{Mn}{w_{AES}} \right)$ where $d_{AES} \approx 1.5 \times 10^4$ and $w_{AES} \approx 10^3$ are the depth and width of the quantum AES circuit. Since $n=128$ in our analysis, whenever $M < 10^5$ the depth of the circuit is dominated by the depth of the AES circuit. \\

Theorems \ref{thm:upper} and \ref{thm:lower} below upper and lower bound the total number of ideal cipher queries necessary to recover one out of $M$ keys.

\begin{theorem} \label{thm:upper}
There exists a $k$-parallel quantum algorithm $A^{F_{c_1,\ldots,c_M}(\cdot)}$ such that $\Pr\left[A^{F_{c_1,\ldots,c_M}(\cdot)}(x)(1^n) \in S\right] > \frac{1}{2}$ in sequential time $O\left(\sqrt{\frac{N}{kM}}\right)$ and makes $O\left(\sqrt{\frac{kN}{M}}\right)$ oracle queries, where probability is taken over selection of a random subset $S \subseteq \{0,1\}^n$ of size $M$ as well as the randomness of $A^{f_S(\cdot)}$.
\end{theorem}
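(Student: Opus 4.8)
\section*{Proof proposal for Theorem~\ref{thm:upper}}

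The plan is to reduce the claim to an abstract search problem and then realize the two standard speedups --- multi-target Grover search and Zalka's parallel (bucketed) Grover~\cite{zalka1999grover} --- simultaneously. By the discussion preceding the theorem, the implementable oracle $F_{c_1,\ldots,c_M}$ coincides with the indicator $f_S$ of the hidden key set $S=\{k_1,\ldots,k_M\}$ except with negligible probability, so it suffices to give a $k$-parallel algorithm that, on oracle access to $f_S$ with $|S|=M$ and $N=2^n$, outputs an element of $S$ with probability above $1/2$. A single circuit running Grover~\cite{STOC:Grover96} on a space with $M$ marked points finds one in $O(\sqrt{N/M})$ queries, and bucketing across $k$ circuits is known to trade a factor $\sqrt{k}$ of time for a factor $\sqrt{k}$ of total work; applying both at once is exactly what produces the target bounds $O(\sqrt{N/(kM)})$ sequential and $O(\sqrt{kN/M})$ total.

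Concretely I would partition $\{0,1\}^n$ into $k$ disjoint buckets of size $N/k$ and assign one bucket to each circuit, as in~\cite{zalka1999grover}, but have every circuit run only $j=\Theta(\sqrt{N/(kM)})$ Grover iterations --- a factor $\sqrt{k}$ fewer than the $\Theta(\sqrt{N/M})$ that would be optimal for the $\approx M/k$ marked points a bucket typically holds. Each circuit then measures. Since every circuit performs $j$ oracle calls, the sequential depth is $O(\sqrt{N/(kM)})$ and the total number of oracle calls is $k\cdot O(\sqrt{N/(kM)})=O(\sqrt{kN/M})$, matching both claimed complexities; the remaining work is entirely in the success probability.

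For the correctness bound I would fix an arbitrary placement of the $M$ keys and let $t_b\ge 0$ be the number landing in bucket $b$, so $\sum_b t_b=M$. On a bucket of size $N/k$ the Grover angle satisfies $\sin\theta_b=\sqrt{t_b k/N}$, so after $j=\Theta(\sqrt{N/(kM)})$ iterations the measured point is marked with probability $p_b=\sin^2\!\big((2j+1)\theta_b\big)$, and the deliberate under-rotation keeps $(2j+1)\theta_b=\Theta(\sqrt{t_b/M})$ inside $[0,\pi/2]$ for every $t_b\le M$. On that interval $\sin^2 x\ge (2x/\pi)^2$, which after substitution gives the clean per-bucket bound $p_b\ge t_b/M$ (and $p_b=0$ when $t_b=0$). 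Because the circuits act on disjoint buckets independently, the probability that every circuit fails is $\prod_b(1-p_b)\le \prod_b(1-t_b/M)\le \exp\!\big(-\sum_b t_b/M\big)=e^{-1}$, so the algorithm succeeds with probability at least $1-1/e>1/2$; note this holds for every placement of $S$, hence in particular for a random one.

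The step I expect to be the main obstacle is making the per-bucket estimate rigorous and uniform over the unknown occupancies $t_b$ while using a single fixed iteration count: the danger is over-rotation, where too many iterations push $\sin^2$ back down and destroy the bound. The fix is precisely the under-rotation above, choosing $j$ (up to a constant-factor adjustment) so that $(2j+1)\theta_b\le \pi/2$ simultaneously for all $t_b\le M$, which pins every bucket to the increasing branch of $\sin^2$ where the lower bound $p_b\ge t_b/M$ is valid; verifying this uniform bound and that the independent disjoint-bucket aggregation composes to a constant above $1/2$ is the crux. The only remaining items are routine: the negligible-error reduction from $F_{c_1,\ldots,c_M}$ to $f_S$ (already argued) and the boundary regime where $kM$ approaches $N$ and a bucket's iteration budget $j$ drops to $O(1)$.
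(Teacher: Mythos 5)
Your proposal is correct, but it takes a genuinely different route from the paper's proof. The paper partitions $\{0,1\}^n$ into $M$ blocks of size $N/M$, picks \emph{one block uniformly at random}, and argues via balls-and-bins that for a random $S$ the chosen block contains at least one key with probability at least $1-1/e$; it then runs the Boyer et al.\ variant of Grover for an \emph{unknown} number of solutions on that single block, subdivided into $k$ sub-blocks for parallelism, so the unknown-occupancy issue you handle analytically is absorbed entirely by the black-box guarantee of the unknown-$t$ algorithm. Your construction instead searches the whole space across the $k$ circuits with a fixed, deliberately under-rotated iteration count $j=\Theta(\sqrt{N/(kM)})$ and proves the per-bucket bound $p_b \geq t_b/M$ directly from $\sin^2\left((2j+1)\theta_b\right)$. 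Two consequences are worth noting. First, your guarantee is \emph{worst-case} over the placement of $S$ (since $\sum_b t_b = M$ for every placement), which is strictly stronger than the paper's: the paper's random-block step genuinely needs $S$ random --- if all $M$ keys happened to sit in a single block, its uniformly random block choice would succeed with probability only $1/M$. Second, your approach pays for this in constant bookkeeping, exactly where you flagged the crux: since $\theta_b=\arcsin\sqrt{t_b k/N}$ exceeds $\sqrt{t_b k/N}$, the natural choice $(2j+1)\approx\frac{\pi}{2}\sqrt{N/(kM)}$ can over-rotate when $t_b k/N$ is not small, and shrinking $j$ by the constant factor $\gamma<1$ needed to keep $(2j+1)\theta_b\le\pi/2$ uniformly degrades the bound to $p_b \geq \gamma^2 t_b/M$ and the overall success probability to $1-e^{-\gamma^2}$, which may fall below $1/2$. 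The clean fix, which you should state explicitly, is to have each circuit verify its measured candidate with one additional oracle call and repeat the whole procedure a constant number of times; this restores success probability above $1/2$ while preserving both the $O\left(\sqrt{N/(kM)}\right)$ sequential bound and the $O\left(\sqrt{kN/M}\right)$ total-query bound. With that amplification step made explicit (and the $j=O(1)$ boundary regime handled by plain sampling, as you note), your argument is complete, and it is arguably more self-contained than the paper's since it avoids invoking the unknown-solution-count Grover machinery.
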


\begin{theorem} \label{thm:lower}
Given any constant $c \in (0,1]$ there is no $k$-parallel quantum algorithm $A^{Enc}$ running in sequential time $o\left(\sqrt{\frac{N}{Mk}}\right)$ and making at most $o\left(\sqrt{\frac{Nk}{M}}\right)$ queries to the ideal cipher that can find an element $x \in S$ with probability $\Pr\left[A^{f_S}(1^n) \in S\right] > c$, where probability is taken over selection of a random subset $S \subseteq \{0,1\}^n$ of size $m$ as well as the randomness of $A^{f_S(\cdot)}$.
\end{theorem}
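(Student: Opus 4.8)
The plan is to establish a quantum query lower bound via the hybrid (adversary) argument of Bennett, Bernstein, Brassard, and Vazirani~\cite{bennett1997strengths}, adapted to the $k$-parallel query model in the style of Zalka~\cite{zalka1999grover}. First I would reduce the ideal-cipher key-recovery problem to an abstract marked-search problem. In the ideal cipher model, for an unqueried candidate key $x$ the value $Enc_x(m)$ is a fresh uniformly random string, so the only way the algorithm can learn whether $x \in \{k_1,\ldots,k_M\}$ is to actually query the cipher at $x$; hence evaluating $F_{c_1,\ldots,c_M}$ is information-theoretically equivalent (up to the negligible collision probability already discussed above) to querying the indicator $f_S$ of a hidden set $S = \{k_1,\ldots,k_M\}$, which is a uniformly random size-$M$ subset of $\{0,1\}^n$. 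It therefore suffices to lower bound the number of parallel queries to $f_S$ needed to output an element of $S$.

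Next I would set up the parallel query model: the algorithm proceeds in $T$ rounds, each round applying a fixed unitary followed by $k$ simultaneous oracle calls, so the number of sequential queries is $T$ and the total number of queries is $Tk$. Running the algorithm against the empty oracle (no marked element), let $\left|\phi_t\right\rangle$ denote the state after round $t$ and, for each input $y$, let $q_y(t)$ be the total squared query magnitude placed on $y$ across the $k$ query registers in round $t$. Because each round issues $k$ queries, $\sum_y q_y(t) = k$ for every $t$, and hence the total query mass satisfies $\sum_{t \le T}\sum_y q_y(t) = Tk$. The core of the argument is the hybrid bound: switching from the empty oracle to the oracle marking $S$ perturbs the final state by at most $\left\| \, \left|\phi_T^S\right\rangle - \left|\phi_T\right\rangle \, \right\| \le 2\sqrt{\,T\sum_{t\le T}\sum_{y\in S} q_y(t)\,}$, where the factor $T$ enters through a Cauchy--Schwarz step over the rounds.

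Then I would average over the random choice of $S$. Since $S$ is a uniformly random size-$M$ subset, $\mathbb{E}_S\!\left[\sum_{t \le T}\sum_{y\in S} q_y(t)\right] = \tfrac{M}{N}\,Tk$, so by Jensen's inequality $\mathbb{E}_S\left\| \,\left|\phi_T^S\right\rangle - \left|\phi_T\right\rangle\,\right\| \le 2T\sqrt{Mk/N}$. Finally I would convert this indistinguishability into a bound on the search success probability: on the empty oracle the output is independent of $S$, so it lands in a random size-$M$ set with probability only $M/N$, and the trace distance between the two output distributions is at most the state distance above, giving $\mathbb{E}_S\!\left[\Pr[A^{f_S}(1^n)\in S]\right] \le \tfrac{M}{N} + 2T\sqrt{Mk/N}$. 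For the left-hand side to exceed the constant $c$ we must have $T = \Omega\!\left(\sqrt{N/(Mk)}\right)$, and correspondingly $Tk = \Omega\!\left(\sqrt{Nk/M}\right)$, contradicting the $o(\cdot)$ assumptions of the theorem.

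I expect the main obstacle to be making the hybrid lemma fully rigorous in the parallel setting: the single-query BBBV bound must be re-derived so that a round of $k$ simultaneous queries is charged its aggregate magnitude $k$ rather than $1$, and the per-round perturbations must accumulate correctly through the telescoping and Cauchy--Schwarz steps. A secondary subtlety lies in the reduction itself, namely formally arguing that the ideal-cipher oracle gives the adversary no more power than the marked-indicator oracle $f_S$, and carefully controlling the negligible event that some $x \neq k_i$ satisfies $Enc_x(m) = c_i$, which is precisely where the two-block structure of the plaintext $m$ is used.
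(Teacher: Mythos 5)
Your proposal is correct in substance but takes a genuinely different route from the paper. The paper argues by contradiction via a black-box reduction to Zalka's parallel optimality theorem: it views the random size-$M$ subset as one planted key per block of size $N/M$, embeds a single-search instance $f_x$ over $\{0,1\}^{n-m}$ into $S$ using random masks $y_z$, simulates each $f_S$ query with two $f_x$ queries, and concludes that your hypothesized algorithm would solve ordinary search over $N'=N/M$ elements in sequential time $o\left(\sqrt{N'/k}\right)$ with $o\left(\sqrt{N'k}\right)$ queries, contradicting Zalka. You instead prove the bound directly with a $k$-parallel BBBV hybrid argument averaged over the random set $S$. Each approach buys something: the paper's reduction is short once Zalka is taken as a black box and inherits his tightness; yours is self-contained, produces an explicit success bound $\mathbb{E}_S\left[\Pr[A^{f_S}(1^n)\in S]\right] \le \frac{M}{N} + 2T\sqrt{Mk/N}$, and handles the theorem's average-case success criterion (probability over random $S$) natively, without the paper's intermediate step of massaging the uniform size-$M$ distribution into the one-per-block distribution. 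The crux you correctly flag --- charging a round of $k$ simultaneous queries its aggregate magnitude rather than paying per register --- does go through, and it is worth recording why: for a phase oracle, $O_S^{\otimes k} - I = -2\Pi$ where $\Pi$ projects onto query tuples with an odd number of coordinates in $S$, and $\lVert \Pi \ket{\psi}\rVert^2 \le \sum_{y\in S} q_y(t)$ since every such tuple has at least one marked coordinate; hence the per-round perturbation is at most $2\sqrt{\sum_{y \in S} q_y(t)}$ with $q_y(t)$ aggregated over all $k$ registers. Be warned that the naive alternative --- telescoping register by register and applying Cauchy--Schwarz over the $k$ registers --- loses a factor $\sqrt{k}$ and only yields $T = \Omega\left(\sqrt{N/M}/k\right)$, which is too weak to prove the theorem, so your lemma must be proved via the projector argument (or equivalent) rather than by iterating the single-query BBBV bound. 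Finally, note that your ideal-cipher-to-$f_S$ reduction step is handled at the same level of informality as in the paper, which likewise passes from $Enc$ to $f_S$ via the negligible-collision observation about the two-block plaintext; a fully rigorous treatment of superposition queries to the cipher would require additional machinery in either proof.
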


The proofs for these theorems can be found in the Appendix, and generally follow commonly used methods for other Grover's algorithm optimality results. These theorems show that when considering an $M$ key batch attack running on multiple quantum computers in parallel Grover's algorithm is an asymptotically optimal solution. This also shows that as you obtain $M$ keys to batch together you can speed up attacks by a factor of $\sqrt{M}$. This can cause some significant reductions in attack costs, bringing some attacks closer to economically feasible ranges. For example, consider a setting where the attacker has access to $M=10^6$ encryptions of the same message under different AES keys. In this case the cost of cracking one of these keys within $100$ years would be around $100$ million (USD) under our quantum mania assumption. This is still quite expensive, but significantly cheaper than the $100$ billion (USD) it would take to crack each key individually.

\section{Discussion}\label{sec:discussion}

We introduced an economic model to analyze the efficacy of quantum key-recover attacks. Our results (for threshold scenarios) are summarized in table \ref{tab:100y_threshold}, \ref{tab:10y_threshold} and \ref{tab:1y_threshold}. Within these tables consider the attacker's most optimistic scenario. Suppose that we are in the ``quantum mania'' world in which the cost/speed of quantum computers improves at a rapid pace. Further suppose that the only time restriction that the attacker faces is that the key-recovery attack must be completed within $100$ years. Even in the attacker's best case scenario the cost of a key-recovery attack is estimated at $9.81 \times 10^{10}$ USD, a very significant amount.  While this is certainly less than the expected classical cost of $9.24\times 10^{29}$ USD we still see a significant financial barrier to these attacks. Under less optimistic scenarios the attacker's costs only increase e.g., if the attacker needs to recover the key in $10$ years under our ``optimistic'' assumption on advances in quantum computing the attacker's costs will be at least $3.352 \times 10^{16}$, well beyond the capabilities of any adversary. Given that the cost of a quantum key-recovery attack is so high we argue that for almost all use cases AES-128 should remain safe in a post-quantum world. We additionally stress that the values we provide should be considered lower bounds. We have ignored many significant issues that arise for quantum computers like error correction, decoherence, and electricity costs.

We advocate for rethinking the common strategy of defending against a quantum key-recovery attack by doubling the key length. In fact, not only do we find that doubling key length is usually unnecessary, we also find that adding a constant number of bits to the key is not needed as suggested in~\cite{Fluhrer17}. In settings where computational overhead is paramount (e.g., embedded devices) and the secret is under our lowest attack cost estimate ($6.63 \times 10^9$ USD) it may be better to opt for smaller key lengths.

\paragraph{Economic Analysis of Second-Preimage Attacks} In this work we focused on quantum key-recovery attacks, such as those that might be run with Grover's algorithm against AES. However there are other similar attacks that are possible using Grover's algorithm e.g., a second-preimage attack against a hash function. In this case we run Grover's algorithm with a fixed hash output and query for elements of the domain that produce the desired output. These attacks would allow for the forging of digital signatures. We note that our models may be extended to represent these attacks as well. To run the attack some number of oracle circuit years would be required, which is a concept very close to the cipher circuit years we used here. While \cite{brassard1997quantum} shows that Grover's algorithm can be used to reduce the cost of $b$ bit hash collisions to $O\left(2^{b/3}\right)$, \cite{bernstein2009cost} claims that these will remain cost ineffective, and that classical computers outperform these methods even under optimistic assumptions for quantum computer speed.

\subsection{Future work}
We have introduced a general method of modeling quantum key-recovery attacks, focusing our attention on Grover's algorithm being used to attack AES-128. The main reasons we focused on this specific instance is because we believe it focuses on a common cipher (AES) and uses the most well-known key-recovery attack (Grover's). There are also published estimates for the circuit width and depth of AES~\cite{PQCRYPTO:GLRS16} that allow us to cleanly estimate the circuit width and depth of this attack. It would be interesting to see how the model would apply in other situations, but this would require additional work to be completed in other areas. If, for example, we wanted to run the same analysis on Triple-DES, which is still a NIST approved cipher~\cite{barker2016guideline}, we would require an analysis similar to the work done by Grassl et. al. Without this the analysis cannot progress much farther beyond a price in CCY - as estimates are not available for the conversion between CCY and actual cost. 
Additional analysis for ciphers like Triple-DES may also be of interest. Whereas AES can be modeled as an ideal cipher with key size $k$ it may not be the case that a ``Triple-AES" would be an ideal cipher with key length $3k$. What would the specifics of a quantum attack against Triple-DES or Triple-AES look like, and is there a more effective way to run the attack than a Grover's attack on a key of length $3k$? 

\edit{Finally, in our analysis we assume that the gate propogation speed of a quantum circuit is upper bounded by 60GHz~\cite{nist_call_for_pqc}. Future work might explore the potential profitability of quantum key recovery attacks under the assumption that quantum computing speeds advance similar to Moore's Law and double every few years. In classical computing Moore's Law appears to be reaching its limit~\cite{theis2017end,courtland_2015}. If quantum computing does not encounter similar barriers, then quantum key recovery attacks might eventually become profitable as long as the $\delta$-discounting parameter does not dominate the rate of progress. For example, we might consider a 100 year attack scenario with some budget and reward amount and ask at what rate quantum computer speeds need to double for a viable attack to exist. }


\bibliographystyle{acm}
\bibliography{extra,../cryptobib/abbrev3,../cryptobib/crypto}

\appendix

\section*{Quantum Computing}\label{apdx:qintro}
Quantum computing allows for computation over qubits, a quantum analog of classical bits. Each qubit has two state analogous to the classical 0 and 1 states denoted as the $\ket{0} = \left( 1 \atop 0 \right)$ and $\ket{1} = \left( 0 \atop 1 \right)$ states respectively. Rather than each qubit remaining in one classical state they are able to exist as a superposition of states $\ket{\psi} = \alpha_0 \ket{0} + \alpha_1 \ket{1}, \alpha_i \in \mathbb{C}$. Multiple qubits are combined via tensor products and are denoted in this paper as (e.g.) $\ket{01} = \ket{0} \otimes \ket{1}$.
When a quantum computer is in a state $\ket{\psi}$ it can be advanced to a new state $\ket{\psi'}$ by multiplication with a unitary matrix $U$ (a matrix whose inverse is its conjugate transpose) i.e. $\ket{\psi'} = U\ket{\psi}$. Quantum algorithms are described as sequences of unitary transformations on these qubits. For a far more detailed description of the basics of quantum computing we direct the reader to~\cite{qcnotes} or another online resource of their choosing.

The ability to operate while in a superposition gives quantum computers an advantage over classical computers - with quantum computers having a significant advantage when solving certain types of problems. In several cases this improvement is very significant e.g. the ability to solve the integer factorization and discrete log problems in polynomial time - a feat that has not been accomplished with classical computers. The ability to solve these problems efficiently has clear implications for security, allowing for quick attacks on several asymmetric or public key cryptosystems. In other cases quantum algorithms may provide a significant advantage over classical attacks, but not so significant as a polynomial time attack. Grover's algorithm, which we will examine most closely, is an example of this type of attack. Here a quantum computer is capable of providing quadratic speed up which, while not as strong as a polynomial time attack, is still of interest.

\subsection{Some relevant quantum definitions}
Throughout this paper we will be working with several specific quantum algorithms to accomplish our goals. These algorithms will make use of the following quantum gates:
\begin{itemize}
\item {\bf Hadamard gate:} $H  = \frac{1}{\sqrt{N}}\begin{bmatrix}
1 & 1 \\
1 & -1
\end{bmatrix}
$ Maps the $\ket{0}$ and $\ket{1}$ states to $\frac{1}{\sqrt{2}}\ket{0} + \frac{1}{\sqrt{2}}\ket{1}$ and $\frac{1}{\sqrt{2}}\ket{0} - \frac{1}{\sqrt{2}}\ket{1}$ respectively. Importantly - the Hadamard gate is its own inverse. 
\item {\bf Phase query:} $O_{i,\pm}: \ket{i} \rightarrow \left(-1\right)^{f(i)}\ket{i}$. Negates the amplitude if the value of $f$ at $i$ is 1.

\item {\bf Inversion around mean amplitude $U_s$:} The unitary transform $U_s = H^{\otimes n}\left(2\ket{0^n}\bra{0^n} - I_n\right)H^{\otimes n}$ inverts the superposition $\ket{\psi}$ around the mean amplitude i.e. $\alpha_{i+1} = \left(\frac{2}{2^n}\sum\limits_{j}\alpha_j \right) - \alpha_i$

\end{itemize}

\section*{Grover's Algorithm definition}\label{apdx:gintro}
\begin{enumerate}
\item Begin in the $\ket{\psi} = \ket{0}^{\otimes N}$ state
\item Apply $H^{\otimes N}$ to get $\ket{\psi} = \frac{1}{\sqrt{N}}\sum\limits_{i=0}^{N-1}\ket{i}$
\item Repeat $O\left(\sqrt{N}\right)$ times:\\
\begin{enumerate}
	\item Apply the phase query $\ket{\psi} \gets O_{\pm}\ket{\psi}$
	\item Apply $U_s$ to invert about the mean amplitude 
\end{enumerate}
\item Observe the result
\end{enumerate}

Note that because the phase query $O_{\pm}$ negates the amplitude of $\ket{x^*}$ while the mean amplitude stays positive we will end up increasing the the amplitude of $\ket{x^*}$ while decreasing the amplitude of all other states. After repeating $O(\sqrt{N})$ times the amplitude $\alpha_{x^*}$ will be likely to be observed. Here the constants are important, as after a time the amplitude $\alpha_{x^*}$ will start decreasing if the inner loop is iterated too much. When N is large (which it will be, for our purposes) the loop should be run about $\frac{\pi}{4}\sqrt{N}$ times for a high probability $\left(\approx 1 - \frac{1}{N}\right)$ of success~\cite{boyer1998tight}.

\section*{Comparison with Classical Attacks}\label{apdx:classicalcomp}
While we are primarily concerned with the economics of quantum attacks it is worth taking a moment to establish a comparison point with classical attacks. Here we once again note a few things - first is that classical attacks parallelize perfectly i.e. there is no penalty to dividing the search space and running in parallel. Because of this it is much easier to establish the costs of an attack. We also note that time limits are less relevant to classical attacks in many situations. So long as the time limit to run a key-recovery attack is longer than the expected lifespan of the equipment to run it we can generally expect the costs of a parallel attack to match the costs of a sequential attack (within some reasonable constant factors). 
For a baseline, we consider an attack using an FPGA setup for AES128 capable of making guesses at 350 million guesses per second while using 6.6W~\cite{scion_aes}. If we take a value of USD 0.08 per kWh for electricity costs we have an expected attack cost using the brute force algorithm $\mathcal{A}$ of $\mathbb{E}(cost(\mathcal{A})) = 2^{127}*cost(\mathcal{A})$. $cost(\mathcal{A}) = \frac{0.0066 * 0.08 * 60^2}{3.5*10^{8}} \approx 5.43*10^{-9}$. A final estimated cost would then be $2^{127}*5.3*10^{-9} \approx 9.24*10^{29}$. We note that this accounts for the energy costs alone, and neglects other cost factors such as equipment cost, maintenance, labor, etc. However, the electricity costs alone are so prohibitively expensive that it alone is conclusive evidence that a classical attack on a 128 bit ideal cipher key is not economically possible.

\section*{$m$ to 1 detailed proofs}
\begin{theorem}
There exists a $k$-parallel quantum algorithm $A^{f_S(\cdot)}$ such that $\Pr\left[A^{f_S(\cdot)}(x)(1^n) \in S\right] > \frac{1}{2}$ in sequential time $O\left(\sqrt{\frac{N}{kM}}\right)$ and makes $O\left(\sqrt{\frac{kN}{M}}\right)$ oracle queries, where probability is taken over selection of a random subset $S \subseteq \{0,1\}^n$ of size $M$ as well as the randomness of $A^{f_S(\cdot)}$.
\end{theorem}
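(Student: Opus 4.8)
The plan is to combine random subsampling---to collapse the $1$-out-of-$M$ search into an essentially single-target search on a much smaller effective domain---with the parallel version of Grover's algorithm that the body already invokes from Zalka~\cite{zalka1999grover}. Recall that parallel Grover on a domain of size $L$ with a single marked element, run across $k$ buckets of size $L/k$, finds the target with constant probability using $\Theta\!\left(\sqrt{L/k}\right)$ sequential queries per bucket and $\Theta\!\left(\sqrt{Lk}\right)$ queries in total. Setting the effective domain size to $L = N' = \Theta(N/M)$ immediately yields sequential time $\Theta\!\left(\sqrt{N/(Mk)}\right)$ and total query count $\Theta\!\left(\sqrt{Nk/M}\right)$, which are exactly the claimed bounds; the entire task is therefore to justify reducing to a size-$\Theta(N/M)$ domain that behaves like a single-target instance.

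First I would fix a subdomain $X' \subseteq \{0,1\}^n$ of size $N' = \Theta(N/M)$ (say the lexicographically smallest such elements; since $S$ is uniformly random, a fixed choice is as good as a random one after averaging). The count $Z = |S \cap X'|$ is hypergeometric with mean $MN'/N = \Theta(1)$, so by choosing the hidden constant appropriately I can guarantee that $\Pr[Z \ge 1]$ is bounded below by a constant. Conditioned on $Z \ge 1$, the bucket of $X'$ containing a marked key is an ordinary single-target Grover instance of size $N'/k = \Theta(N/(Mk))$, and running $\Theta\!\left(\sqrt{N'/k}\right) = \Theta\!\left(\sqrt{N/(Mk)}\right)$ iterations there finds that key with constant probability. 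Each Grover iteration costs one query to $F_{c_1,\ldots,c_M}$; as noted in the body these amount to a constant (parallelizable) number of cipher calls plus a low-order membership test, so the oracle-query accounting is unaffected. Composing the two constant probabilities gives an overall single-shot success probability $p_0 > 0$, and repeating the whole procedure $O(1)$ times---or applying one layer of amplitude amplification---boosts this above $\tfrac{1}{2}$ while changing only the constants in the query bounds.

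The hard part will be controlling how many marked keys land inside a single bucket, since Grover run for a fixed number of iterations can over-rotate when the marked count exceeds what the iteration schedule was tuned for. Because the expected number of marked keys in the whole subdomain $X'$ is only $\Theta(1)$, with constant probability the relevant bucket contains exactly one, and this is the event I would condition on; Grover's known tolerance to a small number of extra marked items (and, if a fully clean bound is wanted, the Boyer--Brassard--H\o yer--Tapp variant for an unknown number of marked elements) absorbs the remaining cases at the cost of only constant factors. The remaining steps---verifying that the final bucket measurement yields an element of $S$ via one extra oracle check, and the routine bookkeeping confirming per-circuit depth $\Theta\!\left(\sqrt{N/(Mk)}\right)$ and total queries $\Theta\!\left(\sqrt{Nk/M}\right)$---are mechanical.
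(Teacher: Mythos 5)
Your proposal is correct and is essentially the paper's own argument: the paper likewise restricts attention to a size-$N/M$ block (it partitions $\{0,1\}^n$ into $M$ blocks and picks one uniformly, which by the randomness of $S$ is equivalent to your fixed subdomain with a hypergeometric count; its balls-and-bins bound $\Pr\left[\,\lvert B_x \cap S\rvert \geq 1\right] \geq 1 - \frac{1}{e}$ plays the role of your $\Pr[Z \geq 1] = \Omega(1)$), and it then splits that block into $k$ sub-blocks and runs the Boyer--Brassard--H\o{}yer--Tapp variant in parallel to handle the unknown marked count, giving sequential time $O\left(\sqrt{N/(Mk)}\right)$ and $O\left(\sqrt{Nk/M}\right)$ total queries exactly as you derive. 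Your added constant-round boosting to push the success probability above $\tfrac{1}{2}$ is a harmless refinement the paper elides by multiplying its two constant probabilities directly.
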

\begin{proof} WLOG we assume that $M=2^m$ is a power of two to simplify exposition.
Let $A^{f_S(\cdot)}(1^n)$ do the following:
\begin{enumerate}
\item Partition the search space into $m$ blocks $B_{0^m},\ldots, B_{1^m}$ where we have $B_x = \{ xy ~:~y \in \{0,1\}^{n-m}\}$ for each $x \in \{0,1\}^{m}$.
\item Select a block uniformly random $B_x$ for $x \in \{0,1\}^{m}$. 
\item Run a modified $k$-Parallel Grover's algorithm on the block $B_x$.
\end{enumerate}
Straightforward balls and bins analysis tells us that $\Pr\left[ \left|B_x \cap S \right| \geq 1\right] \geq 1-\frac{1}{e}$. Boyer et al~\cite{boyer1998tight} adapted Grover's algorithm to handle the case where there are an unknown number of solutions $t$. Their algorithm runs in sequential time $O\left(\sqrt{\frac{|B_x|}{t}}\right)$. Since, $\sqrt{\frac{|B_x|}{t}} = O\left( \sqrt{|B_x|} \right)$ and $\sqrt{|B_x|} = \sqrt{N/M}$ the running time would be at most $O\left(\sqrt{\frac{N}{M}}\right)$. If the attacker is $k$-parallel we can use the standard trick of further dividing $B_x$ into $k$ blocks $B_{x,1},\ldots, B_{x,k}$ of equal size and running an independent search on each of these blocks. Each of these searches requires sequential time $O\left( \sqrt{|B_{x,i}|} \right) =O\left(\sqrt{\frac{N}{Mk}}\right)$ with $O\left(\sqrt{\frac{N}{Mk}}\right)$ queries to $f_S(\cdot)$ total number of oracle queries would be $O\left(\sqrt{\frac{N}{Mk}}\right)$. 

We remark that if $\left| B_x \cap S \right| \geq 1$ then the search will succeed with high probability. Thus, we have $\Pr\left[A^{f_S(\cdot)}(x)(1^n) \in S\right] \geq \frac{1}{2}$ as required.
\end{proof}

\begin{theorem}
Given any constant $c \in (0,1]$ there is no $k$-parallel quantum algorithm $A^{f_S}$ running in sequential time $o\left(\sqrt{\frac{N}{Mk}}\right)$ and making at most $o\left(\sqrt{\frac{Nk}{M}}\right)$ oracle queries that can find an element $x \in S$ with probability $\Pr\left[A^{f_S}(1^n) \in S\right] > c$, where probability is taken over selection of a random subset $S \subseteq \{0,1\}^n$ of size $m$ as well as the randomness of $A^{f_S(\cdot)}$.
\end{theorem}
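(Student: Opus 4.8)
The plan is to prove this lower bound with a hybrid argument in the style of Bennett--Bernstein--Brassard--Vazirani, adapted to the $k$-parallel query model and then averaged over the random choice of the hidden set $S$. I model the algorithm as an alternating sequence of input-independent unitaries $U_0,\dots,U_T$ and query layers, where each layer applies the phase oracle $O_S^{\otimes k}$ acting simultaneously on the $k$ parallel query registers: on a basis state whose query registers hold $(x_1,\dots,x_k)$ it contributes the phase $(-1)^{|\{j\,:\,x_j\in S\}|}$. Let $\ket{\psi_S^t}$ be the global state after $t$ query layers when the oracle encodes $S$, and let $\ket{\psi_\emptyset^t}$ be the state when the oracle is empty (no marked points, so $O_\emptyset=I$). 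The crucial feature, which makes the averaging step clean, is that the empty-oracle trajectory---and hence every query magnitude read off it---is completely independent of $S$.

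First I would establish a per-layer perturbation bound. Writing $q_{x,j}^t$ for the squared amplitude that query register $j$ holds value $x$ just before the $t$-th layer of the empty trajectory, set $a_t=\sum_{j=1}^k\sum_{x\in S}q_{x,j}^t$. Since $O_S^{\otimes k}$ negates exactly those basis states with an odd number of query slots landing in $S$, a direct computation gives $\bigl\|(O_S^{\otimes k}-I)\ket{\psi_\emptyset^t}\bigr\|^2=4\Pr[\text{odd number of slots in }S]\le 4\sum_{j}\sum_{x\in S}q_{x,j}^t=4a_t$, so each layer perturbs the state by at most $2\sqrt{a_t}$. Because $O_\emptyset=I$, the standard telescoping decomposition expresses $\ket{\psi_S^T}-\ket{\psi_\emptyset^T}$ as a sum of terms in which $(O_S^{\otimes k}-I)$ acts on successive empty-trajectory states; the triangle inequality and unitarity then give $\bigl\|\ket{\psi_S^T}-\ket{\psi_\emptyset^T}\bigr\|\le 2\sum_{t=1}^T\sqrt{a_t}$, and Cauchy--Schwarz converts this to $2\sqrt{T\sum_{x\in S}q_x}$, where $q_x=\sum_t\sum_j q_{x,j}^t$ obeys the normalization $\sum_x q_x=kT$.

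Next I would average over the uniformly random set $S$ of size $M$. As the $q_x$ are fixed numbers computed along the $S$-independent empty trajectory, linearity yields $\mathbb{E}_S\bigl[\sum_{x\in S}q_x\bigr]=\tfrac{M}{N}\sum_x q_x=\tfrac{M}{N}kT$, hence $\mathbb{E}_S\bigl[\|\ket{\psi_S^T}-\ket{\psi_\emptyset^T}\|^2\bigr]\le \tfrac{4kMT^2}{N}$. The probability of any measurement outcome differs between the two trajectories by at most their Euclidean distance, and under the empty oracle the output is independent of $S$, so $\Pr_{S,A}[A^{f_\emptyset}(1^n)\in S]=M/N$. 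Combining these facts with Jensen's inequality gives
\[
\Pr_{S,A}\bigl[A^{f_S}(1^n)\in S\bigr]\;\le\;\frac{M}{N}+2T\sqrt{\frac{kM}{N}}.
\]
Substituting the hypothesis $T=o\bigl(\sqrt{N/(Mk)}\bigr)$ makes the second term $o(1)$ while $M/N=M/2^n$ is negligible, so the success probability is $o(1)$ and eventually drops below any fixed constant $c\in(0,1]$, contradicting $\Pr>c$. I would also note that the total-query hypothesis $kT=o\bigl(\sqrt{Nk/M}\bigr)$ is just the sequential-time hypothesis scaled by $k$, so the two assumptions are automatically consistent.

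I expect the main obstacle to be the correct accounting of the $k$ simultaneous queries: one must verify that querying $k$ registers at once costs only a $\sqrt{k}$ factor (through the bound $\|(O_S^{\otimes k}-I)\ket{\psi}\|^2\le 4a_t$ together with the normalization $\sum_x q_x=kT$) rather than a factor of $k$, since it is precisely this $\sqrt{k}$ that produces the $\sqrt{N/(Mk)}$ sequential threshold. A secondary technical point is justifying that query magnitudes may be read off the fixed empty-oracle trajectory while still controlling the deviation to the $S$-dependent trajectory; this is the usual hybrid device, but care is needed to keep the parity computation tight and to confirm that the bit-oracle versus phase-oracle conventions affect only constants.
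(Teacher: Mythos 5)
Your proof is correct, but it takes a genuinely different route from the paper. The paper proves this theorem by reduction: it assumes a too-fast $k$-parallel algorithm $A^{f_S}$ exists and uses it to build an algorithm $A'$ for standard single-target quantum search over $\{0,1\}^{n-m}$, embedding the hidden element $x$ into a random $M$-element set $S = \{(xz)\oplus y_z : z\in\{0,1\}^m\}$ via random masks $y_z$ and simulating each $f_S$ query with two $f_x$ queries; the resulting $A'$ would run in sequential time $o\bigl(\sqrt{N'/k}\bigr)$ with $o\bigl(\sqrt{N'k}\bigr)$ total queries, contradicting Zalka's parallel-Grover lower bound, which is invoked as a black box. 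You instead reprove the lower bound from scratch with a BBBV-style hybrid argument adapted to the $k$-parallel query model: your per-layer bound $\bigl\|(O_S^{\otimes k}-I)\ket{\psi_\emptyset^t}\bigr\|^2 \le 4a_t$ via the parity/union-bound computation, the Cauchy--Schwarz step using $\sum_x q_x = kT$, and the averaging $\mathbb{E}_S\bigl[\sum_{x\in S}q_x\bigr] = \tfrac{M}{N}kT$ (valid because the empty-oracle trajectory is $S$-independent) are all sound, and your accounting correctly extracts the $\sqrt{k}$ rather than $k$ penalty that yields the $\sqrt{N/(Mk)}$ sequential threshold. Each approach buys something: the paper's reduction is shorter and inherits Zalka's tight constants, while yours is self-contained, sidesteps the somewhat delicate simulation details in the paper's sketch (e.g., the ``at most two values $z$'' bookkeeping), and delivers the stronger explicit conclusion $\Pr\bigl[A^{f_S}(1^n)\in S\bigr] \le \tfrac{M}{N} + 2T\sqrt{kM/N} = o(1)$, i.e., vanishing success probability rather than merely success below a fixed constant $c$. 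One small remark: you prove the result using only the sequential-time hypothesis (the total-query hypothesis being implied by $k$-parallelism, as you note), which suffices for this statement but is slightly less general than Zalka's full time/query tradeoff that the paper's reduction taps into.
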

\begin{proof} (sketch) We assume that $M=2^m$ is a power of two to simplify presentation. We first note that the problem of selecting a susbset $S$ of size $M$ is equivalent to randomly partitioning the search space $\{0,1\}^N$ into $M$ blocks $B_{0^m},\ldots, B_{1^m}$ of size $2^{n-m}= N/M$ and then constructing $S$ by randomly selecting one element from each block $B_x$. If we offer to reveal the partition $B_{0^m},\ldots, B_{1^m}$ this can only help the attacker. Thus, without loss of generality we can assume that $S$ is constructed by selecting one random element from each of the sets $B_x = \{ yx ~:~y \in \{0,1\}^{n-m}\}$ for each $x \in \{0,1\}^{m}$.

We will argue by contradiction. In particular, we show that if such a $k$-parallel quantum algorithm  $A^{f_S}$ exists such that (1) $\Pr\left[A^{f_S}(1^n) \in S\right] > c$, (2) $A^{f_S}$ runs in sequential time $o\left(\sqrt{\frac{N}{Mk}}\right)$ and (3) $A^{f_S}$  makes at most $o\left(\sqrt{\frac{Nk}{M}}\right)$ oracle queries then we can devise a new $k$-parallel quantum algorithm $A'$ to solve the regular quantum search problem over the search space $\{0,1\}^{n-m}$  such that $A'$ runs in sequential time $o\left( \sqrt{\frac{N'}{k}}\right)$ and makes at most $o\left( \sqrt{k N'}\right)$ queries contradicting a result of Zalka~\cite{zalka1999grover}. 

Given an indicator function $f_x: \{0,1\}^{n-m} \rightarrow \{0,1\}$ such that $f_x(x)=1$ and $f_x(y) = 0$ for all $y \neq x$ the quantum search problem is to find the secret value $x$ given oracle access to $f_x(\cdot)$. Our algorithm $A'^{f_x}$ will select random values $y_z \in \{0,1\}^{n}$ for each $z \in \{0,1\}^m$ subject to the constraint that for any $z \neq z'$ the last $m$ bits of $y_z$ and $y_{z'}$ are distinct. We can implicitly define the set $S = \{ (xz) \oplus y_z~:~ z \in \{0,1\}^m\}$. The set $S$ cannot be constructed explicitly, but $A'$ can simulate the oracle $f_S(\cdot)$ using two queries to the oracle $f_x(\cdot)$. In particular, given an input $w \in \{0,1\}^n$ for $f_S(\cdot)$ there are at most two values $z \in \{0,1\}^m$ such that $w_z = w \oplus y_z = x_w z$ for some string $x_w \in \{0,1\}^{n-m}$ and then let $x_z$ denote the first $n-m$ bits of $w_z$. We remark that $w \in S$ if and only if we can find $z,w$ such that $w_z = w \oplus y_z = x_wz$ and $f_x(x_w)=1$. $A'$ will now simulate   $A^{f_S}$ to recover $w \in S$ with probability at least $c > 0$. The sequential running time of $A'$ will still be $o\left(\sqrt{\frac{N}{Mk}}\right) = o\left(\sqrt{{N'}{k}} \right)$ and the total number of queries will be $q_{A'} =2 * q_A = o\left( \sqrt{N'k}\right)$. Given $w \in S$ we can find the unique value $z \in \{0,1\}^m$ such that $w_z = w \oplus y_z = x_w z$ for some string $x_w \in \{0,1\}^{n-m}$ and recover $x$ from the first $n-m$ bits of $w_z$. 
\end{proof}

\section*{Detailed profit maximization derivation}\label{apdx:derive}
Here we seek to maximize: 
\begin{align*}
P(T,v_0) &= R_{\delta,T}(T,v_0) - C(T) \\
&= v \delta^T - \frac{C_{CCY} \pi^2 N d^2}{16 T s^2}
\end{align*}
We compress via $\Lambda = \frac{C_{CCY} \pi^2 N d^2}{16 s^2}$. Profit can be maximized as:
\begin{align*}
P'(T,v_0) &= \frac{d}{dT_y}\left( v \delta^{T} - \frac{\Lambda}{T_y}\right)\\
&= v \delta^{T_y} \ln \delta + \frac{\Lambda}{T^2}.\\
0 &= v \delta^{T} \ln \delta + \frac{\Lambda}{T^2}\\
\frac{\Lambda}{T^2} &= v\delta^{T} \ln \delta^{-1}\\
\delta^{T} T^2 &= \frac{\Lambda}{v \ln \delta^{-1}}\\
\end{align*}

\end{document}